\def\A{\mathcal{A}}
\def\B{\mathcal{B}}
\def\CC{\mathcal{C}}
\def\V{{\bm V}}
\def\W{\mathscr{W}}
\def\V{\mathscr{V}}
\def\p{\mathscr{P}}
\def\M{{\cal M}}
\def\X{\mathscr{X}}
\def\K{\mathbb{K}}
\def\R{\mathbb{R}}
\def\C{\mathbb{C}}
\newtheorem{theorem}{\bf Theorem}[section]
\newtheorem{definition}{\bf Definition}[section]
\newtheorem{example}[section]{\bf Example}
\newcommand{\QED}{\nobreak \ifvmode \relax \else
      \ifdim\lastskip<1.5em \hskip-\lastskip
      \hskip1.5em plus0em minus0.5em \fi \nobreak
      \vrule height0.75em width0.5em depth0.25em\fi}
\begin{document}
\title{
Advantages of probability amplitude over probability density in quantum mechanics
}
\author{Yoshimasa Kurihara$^a$\footnote{yoshimasa.kurihara@kek.jp}~~and Nhi My Uyen Quach$^b$\\
$~$\\
$^a$ {\it The High Energy Accelerator Organization (KEK), 
Tsukuba, Ibaraki 305-0801, Japan}\\ 
$^b$ {\it The Graduate University for Advanced Studies, Tsukuba, Ibaraki 305-0801, Japan}}
\date{}
\maketitle
\begin{abstract}
We discuss reasons why a probability amplitude, which becomes a probability density after squaring, is considered as one of the most basic ingredients of quantum mechanics. First, the Heisenberg/Schr{\" o}dinger equation, an equation of motion in quantum mechanics, describes a time evolution of the probability amplitude rather than of a probability density. There may be reasons why dynamics of a physical system are described by amplitude. In order to investigate one role of the probability amplitude in quantum mechanics, specialized codeword-transfer experiments are designed using classical information theory. Within this context, quantum mechanics based on probability amplitude provides the following: i) a minimum error of the codeword transfer; ii) this error is independent of coding parameters; and iii) nontrivial and nonlocal correlation can be realized. These are considered essential advantages of the probability amplitude over the probability density.
\end{abstract}
\section{Introduction}
Quantum mechanics (QM) is considered the most basic theory of nature. All phenomena including those of the gravitational force are considered to be expressed by a language of QM. 
However, an essential understanding of the basic nature of QM yet to be realized, and efforts to look for more fundamental explanations continue. Of course, QM itself is a self-consistent theory and requires no fundamental reasoning to support its truths beyond what are gains from experiments. Still, it is worth pursuing more basic reasons which determine QM to be the most fundamental law of nature. For instance, Wheeler asked ``Why the quantum?" and discussed the relation between QM and information theory \cite{citeulike8529862,Wheeler1991fs}. In this report we attempt to answer the same question from Wheeler's point of view.
One of the most essential differences between quantum and classical mechanics is the former's need for a probabilistic treatment of theoretical predictions. One cannot avoid the probabilistic interpretation of a wave function proposed by Born \cite{springerlink:10.1007/BF01397184}, which is now known as the {\it Copenhagen interpretation}. A fundamental equation of QM, the Heisenberg/Schr{\" o}dinger equation, does not describe the behavior of a physical observable nor its probability density; rather, it describes the probability amplitude, which is a characteristic of QM and possesses no classical counterpart. 
(In a narrow sense,``quantum amplitude'' is a complex number whose square of the absolute value is a probability. In this report, we use a word ``quantum amplitude'' not only for complex numbers, but also for vectors whose square of the absolute value is a probability.)
This report considers reasons why fundamental laws of physics are described by probability amplitude instead of probability density, leaving aside the question of why probability itself is necessary. To clarify essential properties of probability amplitude, codeword-transfer experiments are designed on the basis of classical information theory. Taking into account the discussions on these experiments, three essential advantages of probability amplitude over probability density are pointed out in the following sections.

First, definition of quantum system and probability amplitude are given in Section 2 under a very general mathematical framework. Then, codeword-transfer experiments are designed within classical information theory to investigate the role of probability amplitude. Experiments using a stochastic algorithm cannot avoid statistical error due to sample number. In Section 3, we show that a coding method based on probability amplitude should minimize statistical error. Moreover, statistical errors of the codeword-transfer are independent of the parametrization allowing each character to be transferred; this is shown in Section 4. Another essential feature of QM is its lack of local realism, which can be judged by Bell's inequality. This local realism and Bell's inequality are described using terminology of classical information theory, again as the codeword-transfer experiment. A method based on the probability amplitude can induce a violation of Bell's inequality, as shown in Section 5. Throughout this report, classical information theory is used to describe codeword-transfer experiments.
\section{General quantum system}
A general framework to define the probability amplitude appearing in QM is considered in this section. Here we emphasis algebraic aspects of QM and ignore dynamical ones. The question which must be asked here is ``What minimum set of assumptions makes a system look like quantum mechanics?'' We propose the following elements as indispensable ingredients for QM.
\begin{definition}\label{gcy}{\bf (Quantum Space)}\\
$\K$ is any field and $V$ is a linear (vector) space on it. $\K$ is named as a base field and is associated to each point of a set, $\M$. State vector and probability measure are introduced on these spaces as follows.
\end{definition}
\begin{enumerate}
\item A map from a point on $\M$ to a tensor product of a vector space $V$, 
\begin{eqnarray}
{\bm \Psi}:\M\rightarrow V^k=\underbrace{V\otimes\cdots\otimes V}_{k}
:x\mapsto{\bm \Psi}(x)=\{\psi^1(x),\cdots,\psi^k(x)\},
\end{eqnarray}
is named state vectors. Here, $\M$ is named the base set and $x$ is a point on it. 
\item A map from the state vector to a real number such as
\begin{eqnarray}
\mu:V\rightarrow\R:\psi^i\rightarrow \mu(\psi^i)\in[0,1],~~~~i=1,\cdots,k
\end{eqnarray}
is named a probability measure. The index $i$ on $\psi^i$ runs from $1$ to $k$. The sequential map 
\begin{eqnarray}\label{mu}
\mu\circ\psi^i:\M\rightarrow V\rightarrow\R:x\mapsto \mu^i(x)=\mu(\psi^i)(x)
\end{eqnarray}
is also called a probability measure and represented by the same symbol, $\mu$, when $V$ are obvious. 
\item The probability measure must be normalized as
\begin{eqnarray}
\int_{x\in\Gamma\subseteq\M}\mu(\psi^i)(x)&=&1
\end{eqnarray}
for each $i$, where $\Gamma$ is an appropriate subset of the base set $\M$. Since the probability measure is considered as Lebesgue measure, the integral should be interpreted as summation when $\M$ is a discrete set.
\item The set $\{\K, V, {\bm \Psi}, \mu\}$ is named a ``quantum space.''
$\QED$
\end{enumerate}
To construct QM, these conditions are necessary, but are not sufficient.
For standard relativistic QM (or quantum field theory), we take Hilbert space as a vector-space $V$ on a field of complex numbers $\C$. State vector can be constructed using square integrable functions on a given support. The state vector is associated with each point of the Minkowski manifold as a base set. (Sometimes a Fourier transformation of $\psi^i$ defined in the momentum manifold is used instead of $\psi^i$ itself. In that case, a corresponding Hilbert space is called ``Fock space''.) The probability measure is introduces as $\mu(\psi^i)=|\psi^i|^2$. For the normalization, $\Gamma$ is taken as a hyper-surface on $\M$ such that any two points on $\Gamma$ have a space-like distance each other. (Or it is normalized in the momentum space.) When the probability measure is defined as square of the absolute value of the state vector, the state vector is called a ``{\it probability amplitude}'' in this report, hereafter. In this report, simple quantum spaces are used since only algebraic aspects of QM are of interest here.
\section{Minimization of measurement error}
First, let us consider a statistical error for measurements of a single physical observable on the quantum space defined in the previous section. A codeword-transfer experiment simulating standard QM in a much simpler quantum space, retaining essential properties, is introduced here. In information theory, an encoding method which minimizes statistical error among methods using stochastic algorithms is known. The method using probability amplitude is shown to be an example of such an encoding method giving minimum errors. Terminology of classical information theory used here can be found in {\bf Appendix \ref{ap1}} and references\cite{covert91-12-11,kurihara}.
\begin{definition}\label{codetf}{\bf (Stochastic codeword-transfer experiment)}\\
The experiment satisfying the following conditions is called a {\it stochastic codeword-transfer experiment}:
\end{definition}
\begin{enumerate}
\item Alice ($\A$) transfers a set of $m$ different codewords $\W=\{w_1,\cdots,w_m\}$ to Bob ($\B$) after converting them to state vectors $\psi\in\V$, where $\V$ is a $m$-dimensional vector space.
\item $\B$ receives a state vector sent from $\A$ and obtained one of codewords $\W$ by measuring them. Here meaning of ``measuring'' will be explain in following items. 
\item The same probabilistic function of
\begin{eqnarray*}
\mu_{c}:\V\rightarrow\R:\mu_{c}(\psi)(\omega_i)\mapsto p_i\in[0,1]
\end{eqnarray*}
is given for $\B$. The value $p_i$ gives a probability to observe a codeword $\omega_i$. Only one vector space $\V$ appears here, then the function $\mu_{c}(\psi)(\omega_i)$ will be written as $\mu_{c}(\omega_i)$, hereafter.
\item Probabilistic function $\mu_{c}$ is normalized as:
\begin{eqnarray*}
\sum_{i=1}^m\mu_{c}(\omega_i)&=&1.
\end{eqnarray*}
\item $\A$ can repeat to send a finite number ($n$ times here) of the same state vectors to $\B$.
\item $\B$ obtains $n$ independent codewords by measuring sets of state vectors sent from $\A$, such as $\X=\{x_1,\cdots,x_n\}$.
\item $\B$ has an unbiased estimator to obtain a set of real numbers ${\bar x}_i\in[0,1]$ from measured data as
\begin{eqnarray*}
{\bar x}_i=T_i(\X)&=&\frac{1}{n}\sum_{j=1}^n\Theta_i(x_j),\\
\Theta_i(x_j)&=&
\begin{cases}
    1     & (x_j=\omega_i), \\
    0     & (x_j\neq\omega_i).
  \end{cases}
\end{eqnarray*}
Here ${\bar x}_i$ converges in probability to $p_i$ when $n\rightarrow\infty$, thanks to the law of large numbers.
\item Finally $\B$ obtains a sequence of numbers $\{{\bar x}_1,\cdots,{\bar x}_m\}$, which $\A$ intended to send.
$\QED$
\end{enumerate}
This codeword-transfer experiment is constructed on the quantum space $\{\W, \V, \psi, \mu_c\}$ as defined above. In this case, positions where $\A$ or $\B$ exists are not specified. No dynamical structure is assumed to transport a state vector from $\A$ to $\B$ here, however it is just assumed that these two points are separated from each other and there is no way to communicate other than the state-vector transfer.
A question to ask here is how may one find the probability measure $\mu_c$, which maps the state vector $\psi$ to a real number $\mu_c(\psi)$ to minimize an error of this experiment for any $\psi$. The answer is already known as a theorem, which was first obtained by Fisher \cite{Fisher}. Wootters stated this theorem \cite{Wootters1} without any proof but later provided the same by introducing a statistical distance \cite{Wootters2}. Recently Wootters discussed this subject again in \cite{e15083220}. Here we state the theorem clearly again and give an independent and much simpler proof using an information theory.
\begin{figure}[tb]
  \begin{center}
     \includegraphics[height=5cm]{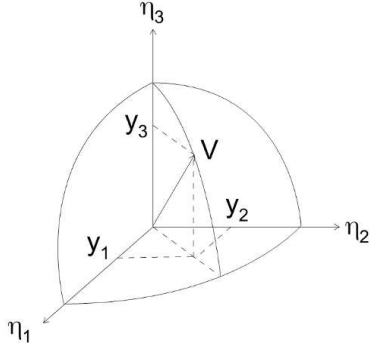}%
    \caption{\footnotesize Example for $m=3$: assignment of a vector $\V$ on two-dimensional sphere.}
    \label{theor1}
  \end{center}
\end{figure}
\begin{theorem}\label{wootters}{\bf (Fisher--Wootters)}\\
Among stochastic codeword-transfer experiments, that which employs the following probability measure gives the smallest error to measure a single codeword from a set of codewords:
\end{theorem}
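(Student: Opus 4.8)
The plan is to read the error of the experiment of Definition~\ref{codetf} as the error of a statistical estimator and to control it through the Cram\'er--Rao bound, the governing quantity being the Fisher information of the multinomial distribution $(\mu_c(\omega_1),\dots,\mu_c(\omega_m))$. Regard the codeword $\A$ wishes to transmit as a point of the unit sphere of $\V$ --- equivalently, as a continuous label $\theta$ along a one-parameter family $\psi(\theta)$ of unit state vectors --- and set $p_i(\theta)=\mu_c(\psi(\theta))(\omega_i)$. From the unbiased estimator $\bar x_i=T_i(\X)$ of Definition~\ref{codetf} and the chain rule (asymptotically, via the delta method), $\B$'s reconstruction of $\theta$ has variance at least $1/\big(n\,I(\theta)\big)$, where the per-sample Fisher information is $I(\theta)=\sum_{i=1}^m \big(p_i'(\theta)\big)^2/p_i(\theta)$. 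So it is enough to establish: (a) the probability measure asserted in the theorem makes $I(\theta)$ constant, independent of the codeword \emph{and} of the family $\psi(\theta)$; and (b) no admissible $\mu_c$ can make $I(\theta)$ larger than that constant.

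The pivotal move is the substitution $q_i:=\sqrt{p_i}$. Then $p_i'=2q_iq_i'$, so $I(\theta)=4\sum_{i=1}^m\big(q_i'(\theta)\big)^2$, while $\sum_i p_i=1$ becomes $\sum_i q_i^2=1$ \emph{identically in $\theta$}. Hence, for \emph{any} $\mu_c$, the vector $q(\theta)$ traces a curve on the unit sphere $S^{m-1}\subset\R^m$, and $I(\theta)/4$ is exactly the squared speed of that curve. For the measure in the theorem, $\mu_c(\omega_i)=|\psi_i|^2$, a one-line computation gives $q_i=|\psi_i|$, so the $q$-curve is --- up to the irrelevant signs of its components --- the state-vector curve $\psi(\theta)$ itself; taking $\theta$ to be arc length along it gives $I(\theta)\equiv 4$. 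This settles (a): the minimum error $\tfrac1{4n}$ is the same for every codeword, i.e.\ distinguishability is spent at a uniform rate along the family.

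For (b), the key observation is that $\psi\mapsto q=\sqrt{\mu_c(\psi)}$ must be injective on the unit sphere of $\V$ (else $\B$ cannot decide which codeword was sent), and it maps that sphere into itself, indeed into its positive octant since $q_i\ge 0$. An injective map of this kind cannot lengthen curves ``as a whole'': in the one-parameter reduction $\theta\mapsto q(\theta)$ is a monotone reparametrization of an arc of $S^{m-1}$, so $\int\sqrt{I/4}\,d\theta=\int\|q'\|\,d\theta$ is at most the length of that arc --- the value attained for $\mu_c(\omega_i)=|\psi_i|^2$ --- whence, by the mean-value theorem, $\inf_\theta I(\theta)\le 4$; the equality $I\equiv 4$ forces $q\equiv\psi$ up to signs, i.e.\ $\mu_c(\omega_i)=|\psi_i|^2$. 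In the genuine $(m-1)$-parameter setting the same conclusion should follow by integrating $\mathrm{tr}\,F^{-1}$ over the state sphere and combining the change-of-variables formula for $\psi\mapsto q$ with Jensen's inequality for the convex function $t\mapsto t^{-2/(m-1)}$ applied to the Jacobian.

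The hard part is (b) when $m>2$: the Fisher information has to be controlled \emph{globally} rather than pointwise, since a non-isometric $\mu_c$ certainly pushes $I(\theta)$ above $4$ at individual codewords. A watertight argument must feed in injectivity (an information-theoretic input), the fixed Riemannian geometry of the state sphere, and an integral inequality, and it must keep track of the positive-octant restriction $q_i\ge 0$ and of whether $\psi\mapsto q$ is assumed onto. The one-parameter (qubit) case, by contrast, is elementary once the $q=\sqrt p$ substitution has been made.
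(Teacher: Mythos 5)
Your overall strategy --- control the error via the Cram\'er--Rao bound and identify the optimal $\mu_c$ through the Fisher information of the multinomial distribution --- is the same engine the paper uses, but your execution is genuinely different. The paper works component by component: it diagonalizes the Fisher information matrix, bounds the multinomial variance by its worst case $\Sigma_{ii}=\mu_i(1-\mu_i)\le 1/4$, equates this to the inverse Fisher information, and solves the resulting ODE $\mu_{i,i}^2=4\mu_i(1-\mu_i)$ to land on $\mu_i=\cos^2\omega_i$. Your substitution $q_i=\sqrt{p_i}$, which turns $\sum_i p_i=1$ into the unit sphere and $I(\theta)/4$ into the squared speed of the curve $q(\theta)$, is instead a reconstruction of Wootters' statistical-distance argument --- precisely the route the paper says it is replacing with ``an independent and much simpler proof.'' Your part (a) is fine and is essentially the second half of the paper's proof (the verification that $\mu_i=\cos^2\omega_i$ saturates the bound with $I\equiv 4$); the ODE the paper solves is your ``constant speed on the sphere'' condition in disguise.

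The gap is in your part (b), and it is not only the multi-parameter case you flag yourself. Even for one parameter, the load-bearing claim --- that because $\psi\mapsto q$ is injective into the sphere it ``cannot lengthen curves as a whole,'' so $\int\lVert q'\rVert\,d\theta$ is at most the arc length of $\psi(\theta)$ --- is false as stated: an injective map of the sphere into itself can stretch arbitrarily (any non-isometric reparametrization does). What actually closes the argument in the statistical-distance proof is a normalization you never invoke: the codeword range is fixed (e.g.\ $\omega_i\in[0,\pi/2]$ with $p_i$ running over all of $[0,1]$), so the $q$-curve has \emph{fixed endpoints} at adjacent vertices of the positive octant and hence, being monotone, has total length exactly $\pi/2$ independently of $\mu_c$; only then does the mean-value step give $\inf_\theta I(\theta)\le 4$ with equality forcing constant speed. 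Without pinning the endpoints, nothing prevents a $\mu_c$ with $I>4$ everywhere on a shorter parameter range. For $m>2$ your proposed repair (integrating $\mathrm{tr}\,F^{-1}$ and applying Jensen to the Jacobian) is only a program, not a proof; note that the paper sidesteps this entirely by reducing to one scalar ODE per component after diagonalizing the FIM, which is where its (comparative) simplicity lies.
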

\begin{enumerate}
\item $\A$ selects a set of codewords $\W=\{w_1,\cdots,w_m\}$ and a sequence of numbers $\p=\{p_1,\cdots,p_n\}$ which is intended to be sent to $\B$. The $\p$ is normalized as $\sum_{i=1}^{n}p_i=1$.
\item $\A$ prepares an $m$-dimensional Euclid space $\R^m$ and orthonormal bases $\{\eta_1,\cdots,\eta_m\}$.\label{wootters12}
\item $\A$ sets a state vector $\psi$ as to map each element of $\p$ at a point on a unit sphere $S^{m-1}_1$ centered at the origin of $~\R^m$ to be $p_i=|y_i|^2$, where $y_i$ is an $i$'s component of the position of $\psi$ on $S^{m-1}_1$ by the orthonormal bases defined above. 
\item $\B$ obtains a codeword $\omega_i$ with the probability measure $\mu_c(\omega_i)=|y_i|^2$.
$\QED$
\end{enumerate}
\begin{proof}
\begin{enumerate}
\item {\bf The smallest error} $\Rightarrow~\mu_c(\omega_i)=|y_i|^2${\bf :}\\
Data after $n$ independent measurements are expressed as $\X=(x_1,\cdots,x_n)$ with the probability $\mu_c(\omega_i)=|y_i|^2$. The probability density to obtain a set of data $\X$ is assumed to be expressed as ${\rm f}(\X;\psi)=\mu_c(\omega_i)$, where $\mu_c(\omega_i)$ used defined as an equation (\ref{mu}). Then the Fisher information matrix (FIM) \cite{covert91-12-11} can be written as
\begin{eqnarray*}
J_{ij}&=&{\rm f}(\X;\psi)
\frac{\partial\log{{\rm f}(\X;\psi)}}{\partial \omega_i}
\frac{\partial\log{{\rm f}(\X;\psi)}}{\partial \omega_j}\\
&=&\sum_{k=1}^{m}\mu_c(\omega_k)
\frac{\partial \mu_c(\omega_k)/\partial \omega_i}{\mu_c(\omega_k)}
\frac{\partial \mu_c(\omega_k)/\partial \omega_j}{\mu_c(\omega_k)}.
\end{eqnarray*}
The functions $\mu_c(\omega_i)$ are not independent of each other owing to conservation of the total probability, $\sum_{i=1}^m\mu_c(\omega_i)=1$. We can assume that all $\mu_c(\omega_i)~(i\geq 2)$ are independent except $\mu_c(\omega_1)=1-\sum_{j=2}^m\mu_c(\omega_j)$ without any loss of generality. Since all other $\mu_c(\omega_{i\neq1})$, except this correlation due to the conservation of probability, can be set to be independent after appropriate linear transformation of $\mu_c$, the FIM can be taken to be a diagonal matrix.
Here we use a short-hand expression, $\mu_c(\omega_i)=\mu_{i}$, $d\mu_c(\omega_j)/d\omega_i=\mu_{j,i}$, and $\sum_{j=2}^m\mu_c(\omega_j)=\bar{\mu}$; then the diagonal components of the FIM can be written as
\begin{eqnarray*}
J_{ii}&=&\sum_{k=1}^m\mu_k\left(\frac{\mu_{k,i}}{\mu_k}\right)^2\\
&=&\mu_1\left(\frac{\mu_{1,i}}{\mu_1}\right)^2
+\sum_{k=2}^m\mu_k\left(\frac{\mu_{k,i}}{\mu_k}\right)^2\\
&=&(1-\bar{\mu})
\left(\frac{\partial(1-\bar{\mu})/\partial\omega_i}{1-\bar{\mu}}\right)^2
+\mu_i\left(\frac{\mu_{i,i}}{\mu_i}\right)^2\\
&=&
\frac{\mu_{i,i}^2}{1-\bar{\mu}}
+\frac{\mu_{i,i}^2}{\mu_i}.
\end{eqnarray*}
Here the independence of all $\mu_{k\geq2}$ each other is used second line to third line in above calculations.
The minimum value of $J_{ii}$ is obtained when $\bar{\mu}=\mu_i$ within the allowed region of $\mu_i\leq\bar{\mu}\leq 1$. Then we get
\begin{eqnarray*}
\min\{J_{ii}\}&=&\tilde{J}_{ii}
=\frac{\mu_{i,i}^2}{\mu_i(1-\mu_i)}.
\end{eqnarray*}
On the other hand, measured data after $n$ independent measurements must follow a multinomial distribution, whose covariance matrix $\bm{\sigma}$ is
\begin{eqnarray*}
\sigma_{ij}=
  \begin{cases}
    n{\tilde p}_i{\tilde p}_j     & (i \neq j), \\
    n{\tilde p}_i(1-{\tilde p}_i) & (i = j),  
  \end{cases}
\end{eqnarray*}
where $\tilde{p}_i$ is measured probability of an $i$th codeword.
Then, after $n$ independent measurements through estimator $T$ defined in {\bf Definition~2.6}, a covariant matrix ${\bm \Sigma}(\X)$ can be expressed as
\begin{eqnarray*}
\Sigma_{ij}(\X)&=&\frac{1}{n}\sum_{k=1}^n 
(\Theta_i(x_k)-\bar{x}_i)
(\Theta_j(x_k)-\bar{x}_j)\\
&\approx&\frac{1}{n}\sigma_{ij}.
\end{eqnarray*}
Then, diagonal components of the covariant matrix become
\begin{eqnarray*}
\Sigma_{ii}(\X)&\approx&\frac{1}{n}\sigma_{ii}\\
&=&{\tilde p}_i(1-{\tilde p}_i).
\end{eqnarray*}
In general, measured probability ($\tilde{p}_i$) differs from true probability ($\mu_i$); however, it is certain that the error of $|{\tilde p}_i-\mu_i |$ will be less than any small value after a sufficient number of events accumulates, as a result of the law of large numbers and the assumption that the estimator is unbiased. Then, we use $\mu_i$ instead of $\tilde{p}_i$ in the discussions that follow.
The probability $\mu_i$ that maximizes diagonal components of the covariant matrix is given as $\mu_i=1/2$ due to $d\Sigma_{ii}/d\mu_i=(1-2\mu_i)=0$. Then, the diagonal components of the covariant matrix are given as $\Sigma_{ii}=1/4$. The Cram\'{e}r--Rao inequality \cite{Rao,Cramer,covert91-12-11} gives the lower bound of the covariant matrix as
\begin{eqnarray*}
{\bm \Sigma}(\bm \mu)\geq{\bm J}^{-1}.\label{c-r}
\end{eqnarray*}
A possible range of the inverse of the FIM is
\begin{eqnarray*}
\frac{\mu_i(1-\mu_i)}{\mu_{i,i}^2}\geq\left({\bm J}^{-1}\right)_{i,i}
\geq\frac{1}{{\tilde J}_{i,i}}
\geq 0,
\end{eqnarray*}
where we use the FIM (${\bm J}$) is a diagonal matrix.
Then a solution of the following differential equation gives the minimum variance in general:
\begin{eqnarray*}
\frac{\mu_{i,i}^2}{\mu_i(1-\mu_i)}&=&4\\
\Rightarrow \mu_{i,i}^2
&=&4\mu_i\left(1-\mu_i\right).
\end{eqnarray*}
The solution of this equation can be obtained as
\begin{eqnarray*}
\mu_i&=&\cos^2{(\omega_i+\phi_i)},
\end{eqnarray*}
where $\phi_i$ is an arbitrary phase factor. This phase factor corresponds to a rotation of the coordinate system prepared in {\bf Theorem~\ref{wootters}} and gives no essential effect on the result. Then we set $\phi_i=0$ hereafter as $\mu_i=\cos^2{\omega_i}$. Each $\omega_i$ gives the same differential equation; then parametrization $y_i=\sqrt{\mu_i}=\cos{\omega_i}$ gives the lowest value of the variance, which is nothing other than the direction cosine of the vector $V$, whose endpoint is on the unit sphere $S^{m-1}_1$. Then, the method to give the minimum variance is: {\bf i)} normalize the codeword $\omega_i$ to $0\le\omega_i\le\pi/2$; {\bf ii)} map on the $S^{m-1}_1$ as $\omega_i$ to be an angle from axis $\eta_i$; set {\bf iii)} the probability to observe the codeword $\omega_i$ to be $\cos^2{\omega_i}$, which are the same as the assumptions of the theorem.
\item $\mu_c(\omega_i)=|y_i|^2~\Rightarrow$ {\bf the smallest error:}\\
When we set $\mu_i=|y_i|^2=\cos^2\omega_i$, the diagonal components of a covariant matrix become
\begin{eqnarray*}
\Sigma_{ii}(\bm{\mu})&=&\mu_i(1-\mu_i)\\
&=&\cos^2\omega_i(1-\cos^2\omega_i)\\
&=&\cos^2\omega_i\sin^2\omega_i.
\end{eqnarray*}
Then the minimum value of $\Sigma_{ii}$ is obtained to be $1/4$ at $\omega_i=\pi/4$. On the other hand, the diagonal component of the FIM matrix can be
\begin{eqnarray*}
\tilde{J}_{ii}&=&\Bigl|\mu_{i,i}\Bigr|^2
\frac{1}{\mu_i(1-\mu_i)}\\
&=&4\cos^2\omega_i\sin^2\omega_i/(\cos^2\omega_i\sin^2\omega_i)\\
&=&4.
\end{eqnarray*}
Then $\tilde{J}_{ii}^{-1}=1/4$, which matches the minimum value of $\Sigma_{ii}$.
\end{enumerate}
\end{proof}
In the above decoding method, a relation between probability amplitude and density is algebraically the same as in the standard QM, which means the latter employs a coding method that minimizes statistical error among other stochastic methods. This is our first example outlining the advantage of the method using probability amplitude. 
\begin{figure}
  \begin{center}
     \includegraphics[height=3cm]{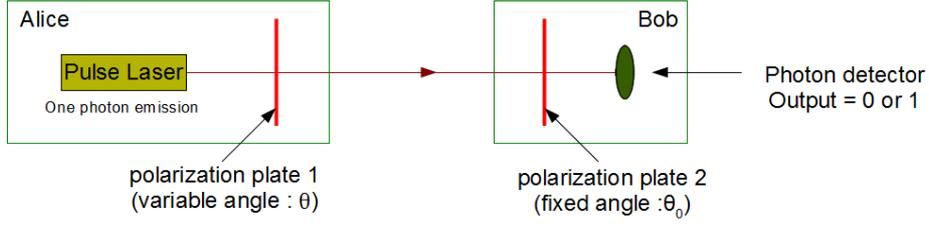}
    \caption{\footnotesize Code-transfer experiment realized by using polarized laser beam.}
    \label{exp1}
  \end{center}
\end{figure}
\section{Parametrization independence of a measurement error}
Related to the {\bf Theorem~\ref{wootters}}, one can prove following theorem, which is also given by Wootters \cite{Wootters1,Wootters2} and is important to consider one role of the probability amplitude.
\begin{theorem}\label{wootters2}~\\
The encoding rule given by the Fisher--Wootters theorem gives uniform errors independent of its parametrization.$\QED$
\end{theorem}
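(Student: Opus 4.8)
The plan is to extract the statement directly from the computation already carried out in the proof of Theorem~\ref{wootters}: there the diagonal Fisher information for the encoding $\mu_i=\cos^2\omega_i$ came out to the pure number $\tilde J_{ii}=4$, with no dependence on the codeword index $i$ and, crucially, none on the value $\omega_i$ at which the codeword sits on $S^{m-1}_1$. Since Theorem~\ref{wootters} also shows the Cram\'er--Rao bound is saturated by this encoding, the variance of the estimate of the encoded quantity is pinned to $1/(4n)$ everywhere. So the first step is merely to fix terminology: here ``error'' means the variance of the unbiased estimate of the \emph{encoded parameter} $\omega_i$ (not of $\mu_i$), and the claim is that this variance is the constant $1/(4n)$ whatever the codeword, whatever its position, and whatever coordinate chart Alice uses to lay the codewords on the sphere.

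Second, I would make the cancellation explicit by error propagation. From Theorem~\ref{wootters}, $\mathrm{Var}(\bar x_i)=\tfrac1n\mu_i(1-\mu_i)=\tfrac1n\cos^2\omega_i\sin^2\omega_i$; inverting the amplitude map via $\omega_i=\arccos\sqrt{\bar x_i}$ and using $d\mu_i/d\omega_i=-\sin2\omega_i$ gives
\[
\mathrm{Var}(\hat\omega_i)=\mathrm{Var}(\bar x_i)\Bigl(\frac{d\omega_i}{d\mu_i}\Bigr)^2
=\frac1n\cos^2\omega_i\sin^2\omega_i\cdot\frac{1}{4\sin^2\omega_i\cos^2\omega_i}=\frac{1}{4n}.
\]
The whole point is that the position-dependent weight $\mu_i(1-\mu_i)$, which makes $\Sigma_{ii}$ vary over the sphere, is exactly undone by the Jacobian of the square-root (amplitude) map, leaving a constant --- this is the sense in which the error is ``uniform.''

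For the ``independent of parametrization'' part I would switch to the geometric picture behind Wootters' result. Rewriting the diagonalized Fisher form $\sum_i d\mu_i^2/\mu_i$ used in Theorem~\ref{wootters} in terms of the amplitudes $y_i=\sqrt{\mu_i}$ turns it into $4\sum_i dy_i^2$, i.e.\ four times the Euclidean line element of $\R^m$ restricted to the constraint surface $\sum_i y_i^2=1$ --- the round metric on $S^{m-1}_1$. The round metric is homogeneous and isotropic: the rotation group of $\R^m$ (the ``rotation of the coordinate system'' already invoked at the end of the proof of Theorem~\ref{wootters}, of which the arbitrary phases $\phi_i$ are the simplest instance) acts transitively on $S^{m-1}_1$ by isometries. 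Hence the statistical distance between neighbouring codewords, and with it the measurement error expressed in those intrinsic units, is the same at every point and is left unchanged by any relabelling of axes, phases, or angular coordinates --- which is exactly ``uniform errors independent of parametrization.''

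I expect the main obstacle to be conceptual rather than computational: a blunt reparametrization $\omega_i\mapsto g_i(\omega_i)$ visibly rescales the variance by $\bigl(g_i'\bigr)^2$, so the invariance asserted by the theorem can only mean the error gauged against the Fisher (statistical-distance) metric, equivalently the error of the canonical estimator of the intrinsic angular coordinate. Most of the write-up should therefore be devoted to making that reading precise --- to arguing that ``distance measured in number of statistically distinguishable states'' is by construction a coordinate-free quantity, so the amplitude coordinates are merely the chart in which it takes the manifestly flat, position-independent form $4\sum_i dy_i^2$ --- after which uniformity and parametrization-independence are immediate from $\tilde J_{ii}=4$ (equivalently, from the homogeneity of the round sphere) established above.
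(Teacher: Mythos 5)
Your proposal is correct, but it reaches the conclusion by a genuinely different route from the paper. The paper's proof stays entirely elementary: it propagates the angular error into the amplitude components via $\left|\delta y_i/\delta\omega_i\right|=\sin\omega_i$, sums the squared sensitivities over all $m$ components, and invokes the normalization $\sum_{i=1}^m\cos^2\omega_i=1$ to collapse the sum to the constant $\sum_i\sin^2\omega_i=m-1$, giving a \emph{total} error $\sigma^2=(m-1)/n$ that manifestly cannot depend on where the codeword vector sits on $S^{m-1}_1$; the uniformity is thus a direct consequence of the spherical constraint, with no appeal to the Fisher information computed in Theorem~\ref{wootters}. You instead work \emph{per component} and in the opposite direction: you invert the amplitude map, show that the Jacobian $(d\omega_i/d\mu_i)^2=1/(4\sin^2\omega_i\cos^2\omega_i)$ exactly cancels the position-dependent multinomial variance $\mu_i(1-\mu_i)/n$, and obtain the constant $1/(4n)$ for each codeword, then explain the cancellation geometrically by rewriting the Fisher form as $4\sum_i dy_i^2$, the round metric, and invoking transitivity of the rotation group. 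Your version buys a stronger, componentwise statement, a conceptual reason for the uniformity (homogeneity of the statistical-distance metric), and an explicit resolution of the ambiguity in the word ``parametrization'' (which the paper leaves implicit --- its proof really establishes independence of the \emph{position on the sphere}, which is also what your homogeneity argument delivers); the paper's version buys brevity and independence from the Cram\'er--Rao machinery, at the cost of asserting uniformity only for the aggregate error over all $m$ degrees of freedom. Both arguments are sound, and your error-propagation identity is verifiably consistent with the quantities $\Sigma_{ii}=\mu_i(1-\mu_i)$ and $\tilde J_{ii}=4$ already established in the proof of Theorem~\ref{wootters}.
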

\begin{proof}
A set of state vectors $\V=\{\omega_1,\cdots,\omega_m\}$ are encoded as $y_i=\cos{\omega_i}$, $\sum_{i=1}^my_i^2=1$, according to the Fisher--Wootters theorem.
Looking at an $i$th element $\omega_i$, one sees a relation between an error of estimation $\delta\omega_i$ and an error to measure the parameter $\delta y_i$ as 
$
\left|\delta y_i/\delta\omega_i\right|=\sin{\omega_i}.
$
Under the normalization condition of
$
\sum_{i=1}^my_i^2=\sum_{i=1}^m\cos^2{\omega_i}=1,
$
the total error after measuring $n$ independent data becomes
\begin{eqnarray*}
\sigma^2&=&\frac{1}{n}
\sum_{i=1}^m\left|\frac{\delta y_i}{\delta\omega_i}\right|^2\\
&=&\frac{1}{n}\sum_{i=1}^m\left|\sin^2{\omega_i}\right|\\
&=&\frac{1}{n}\sum_{i=1}^m\left|1-\cos^2{\omega_i}\right|\\
&=&\frac{m-1}{n}\\
\Rightarrow\sigma&=&\sqrt{\frac{m-1}{n}},
\end{eqnarray*}
which means a mean-square error is determined by the statistics per degree of freedom and independent of the position on an $m$-dimensional sphere. A factor $\sigma^2\propto 1/n$ follows from the central limit theorem.
\end{proof}
\begin{example}\label{leaser}{\bf (Codeword-transfer experiment realized using a polarized laser beam)}\end{example}
\vskip -0.4cm
\noindent
Let us consider the codeword-transfer experiment defined in {\bf Definition~\ref{codetf}} for a realistic quantum system: the pulse laser has a polarizer ($\lambda/4$ plate). (See Fig. \ref{exp1}.) Alice ($\A$) has experimental equipment consisting of a pulse laser and a polarizer and can transfer a single photon with linear polarization with any polarization plane to Bob ($\B$).  $\B$ has a $\lambda/4$ plate with fixed plane and photon detector with $100\%$ efficiency. $\A$ knows the angle of the polariser plane of $\B$, say $\theta_0$, and has a clock exactly synchronised to that of $\B$. $\A$ assigns codewords on equally separated points on a unit circle, and selects an integer, say $j$. Then $\A$ sets an angle of the polariser according to a codeword to be $\theta=\theta_0+\alpha$, where $\alpha=j/2\pi$. $\A$ transfers one photon a second and $n$ photons in total. $\B$ measures photons behind the $\lambda/4$ plate. If $\B$ observes a photon, he records ``$1$'' and if not, he records ``$0$''. As a result $\B$ obtains data $\X_n=\{X_1,X_2,\cdots,X_n\}=\{1,1,0,1,0,\cdots\}$, and decodes them to one real number with average $\bar{x}=\sum_i^n X_i/n$. According to quantum mechanics this number must be $\bar{x}=\sin{\alpha}$. Finally, $\B$ obtains a number which $\A$ intended to send. This codeword-transfer experiment satisfies {\bf Definition~\ref{codetf}}, which means quantum mechanics gives codeword-transfer experiments with the smallest errors, given by Theorem.~$\ref{wootters}$.
$\QED$
\section{Nonlocal realism}
A point definitely distinguishing QM from classical mechanics is that QM does not have local realism. Related to this fact, there are two important theorems: violation of Bell's inequality\cite{Bell} and Kochen--Specker theorem\cite{kochen1975problem}. Both theorems are related to a correlation of two independent measurements. It is shown in this section that these two theorem can be realized again using the probability amplitude. In order to discuss a correlation of two independent measurements, a double codeword-transfer experiment is designed.
\begin{definition}\label{dcodetf}{\bf (Stochastic double codeword-transfer experiment)}
A {\it stochastic double codeword-transfer experiment} is defined by extending {\bf Definition \ref{codetf}} as follows:
\end{definition}
\begin{enumerate}
\item\label{dcodetf-i1} Alice $(\A)$ transfers two sets of $m$ different codewords and state vectors, $\W_\alpha=\{\alpha_1,\cdots,\alpha_m\}$ and $\W_\beta=\{\beta_1,\cdots,\beta_m\}$, to Bob $(\B)$ and Charley $(\CC)$ after converting them to state vectors $\psi_\alpha\in\V_\alpha$ and $\psi_\beta\in\V_\beta$, where $\V_\alpha$ and $\V_\beta$ are $m$-dimensional vector spaces.
\item\label{dcodetf-i2} $(\B)$ and $(\CC)$ are placed opposite to $\A$ and receive state vectors sent from $\A$, stochastically choose one of the two sets to be measured. Neither $\B$ and $\CC$  know which set is chosen by the other $($independence of set selection$)$.
\item Encoding is performed using the following probabilistic function:
\begin{eqnarray*}
\mu_{dc}:(\V_\alpha\oplus \V_\beta)\otimes(\V_\alpha\oplus \V_\beta)\rightarrow\R:(\gamma_i,\gamma_j)\mapsto\mu_{dc}(\gamma_i,\gamma_j)=p_{i,j}\in[0,1], 
\end{eqnarray*}
where $\V_\alpha\oplus \V_\beta=(\alpha_1,\cdots,\alpha_m,\beta_1,\cdots,\beta_m)=(\gamma_1,\cdots,\gamma_{2m})$ and $1\le i,j\le 2m$. First (second) slot of $\mu_{dc}$ are for state vectors sent to $\B$ ($\CC$), respectively.
\item $\B$ and $\CC$ select for measurement one of the state vectors $\psi_A$ or $\psi_B$, independently. Possible combinations of measured codewords are $\{(\alpha_k,\alpha_l)$, $(\alpha_k,\beta_l)$, $(\beta_k,\alpha_l)$, $(\beta_k,\beta_l)\}$. Probabilistic function $\mu_{dc}$ is normalized as:
\begin{eqnarray*}
~^\forall\gamma_j,~~\sum_{i=1}^m\mu_{dc}(\alpha_i,\gamma_j)=1,&~&
~^\forall\gamma_j,~~\sum_{i=1}^m\mu_{dc}(\beta_i,\gamma_j)=1,\\
~^\forall\gamma_i,~~\sum_{j=1}^m\mu_{dc}(\gamma_i,\alpha_j)=1,&~&
~^\forall\gamma_i,~~\sum_{j=1}^m\mu_{dc}(\gamma_i,\beta_j)=1.
\end{eqnarray*}
However it does not guarantee that all the probability measures, $\mu_{dc}(\alpha_k,\alpha_l)$, $\mu_{dc}(\alpha_k,\beta_l)$, $\mu_{dc}(\beta_k,\alpha_l)$, and $\mu_{dc}(\beta_k,\beta_l)$, exist at the same time.
\item $\A$ can send a finite number ($n$ times here) of the same set of state vectors to $\B$ and $\CC$.
\item Measurements:
\begin{enumerate}
\item $\B$ obtains $n$ independent codewords by measuring sets of state vectors sent from $\A$, such as $\X^\B=\{x^\B_1,\cdots,x^\B_n\}$, where $x^\B_i\in\W_\alpha\oplus\W_\beta$.
\item For $\CC$, the same as (\it{a}) with a replacement $\B\rightarrow\CC$. 
\end{enumerate}
\item Estimator:
\begin{enumerate}
\item $\B$ has an unbiased estimator to obtain a set of real numbers ${\bar x}_i\in[0,1]$ from measured data as
\begin{eqnarray*}
{\bar x}^\B_i&=&T_{i}(\X^\B)\\
&=&\frac{\sum_{k=1}^n\Theta_i(x^\B_k)}
{\sum_{l=1}^{2m}\sum_{k=1}^n\Theta_l(x^\B_k)}\\
\Theta_i(x_j)&=&
\begin{cases}
    1     & (x_j=\gamma_i), \\
    0     & (x_j\neq\gamma_i),
\end{cases}
\end{eqnarray*}
where $i$ runs from $1$ to $2m$.
\item  For $\CC$, the same as above with a replacement $\B\rightarrow\CC$.

\end{enumerate}
\item After completing measurement, $\B$ and $\CC$ make a table 
$
{\bar x}_{i,j}=({\bar x}^\B_i,{\bar x}^\CC_j),
$
where ${\bar x}_{i,j}$ converges in probability to $p_{i,j}$ when $n\rightarrow\infty$, thanks to the law of large numbers.$\QED$
\end{enumerate}
\noindent
Bell's inequality is a critical test to distinguish a nonlocal theory from a local one. This theorem can be expressed by the language of classical information theory \cite{Braunstein}. We state this theorem and give a proof in the context of {\bf Definition \ref{dcodetf}}.
\begin{theorem}\label{bell}{\bf (Bell)}\\
Let us consider a case with a complete table to give the probability of observing any pair of codewords as 
\begin{eqnarray*}
P(\alpha_{i_1},\alpha_{i_2},\beta_{j_1},\beta_{j_2})&=&
\mu_{dc}(\alpha_{i_1},\alpha_{i_2})
\mu_{dc}(\beta_{j_1},\beta_{j_2})\\&+&
\mu_{dc}(\alpha_{i_1},\beta_{j_2})
\mu_{dc}(\beta_{j_1},\alpha_{i_2})\\&+&
\mu_{dc}(\beta_{j_1},\alpha_{i_2})
\mu_{dc}(\alpha_{i_1},\beta_{j_2})\\&+&
\mu_{dc}(\beta_{j_1},\beta_{j_2})
\mu_{dc}(\alpha_{i_1},\alpha_{i_2}).
\end{eqnarray*}
These measurements are performed as the stochastic double codeword-transfer experiment defined above. In this case, a conditional entropy follows the inequality
\begin{eqnarray*}
S(\alpha_{i_1}|\alpha_{i_2})\leq 
S(\alpha_{i_1}|\beta_{j_2})+S(\beta_{j_2}|\beta_{j_1})+S(\beta_{j_1}|\alpha_{i_2}).
\end{eqnarray*}
Definitions and necessary formulae for following proof can be found in \cite{covert91-12-11} and summarized in {\bf Appendix \ref{ap2}}.
\end{theorem}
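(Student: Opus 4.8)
The plan is to reduce the claimed chain inequality to the standard subadditivity of Shannon entropy, exactly as in the classical information-theoretic presentation of Bell's theorem.  First I would recall from {\bf Appendix \ref{ap2}} the basic properties of the conditional entropy $S(\cdot|\cdot)$: non-negativity, the chain rule $S(X,Y)=S(X)+S(Y|X)$, and the key monotonicity fact that conditioning reduces entropy, $S(X|Y)\le S(X)$ and more generally $S(X|Z)\le S(X|Y,Z)$ is false but $S(X|Y)\le S(X|Z)$ when $X$--$Y$--$Z$ need not hold; rather, the statement I will actually need is the ``triangle-type'' inequality for conditional entropy, $S(X|Z)\le S(X|Y)+S(Y|Z)$, which itself follows from $S(X|Z)\le S(X,Y|Z)=S(Y|Z)+S(X|Y,Z)\le S(Y|Z)+S(X|Y)$.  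This single triangle inequality, applied twice, yields
\begin{eqnarray*}
S(\alpha_{i_1}|\alpha_{i_2})
&\le& S(\alpha_{i_1}|\beta_{j_2})+S(\beta_{j_2}|\alpha_{i_2})\\
&\le& S(\alpha_{i_1}|\beta_{j_2})+S(\beta_{j_2}|\beta_{j_1})+S(\beta_{j_1}|\alpha_{i_2}),
\end{eqnarray*}
which is precisely the asserted bound.

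The substantive content, and the step I expect to be the main obstacle, is justifying that the four random variables $\alpha_{i_1},\alpha_{i_2},\beta_{j_1},\beta_{j_2}$ possess a genuine joint distribution, so that the entropy-triangle inequality is legitimately applicable.  This is exactly the hidden-variable/local-realism hypothesis, and in the present framework it is supplied by the assumption in the theorem statement that a \emph{complete} table $P(\alpha_{i_1},\alpha_{i_2},\beta_{j_1},\beta_{j_2})$ exists — the displayed four-term expression in terms of products of $\mu_{dc}$-values.  I would therefore first verify that this $P$ is a bona fide probability distribution on the product of the four codeword alphabets: non-negativity is immediate since each $\mu_{dc}\in[0,1]$, and normalization must be checked using the marginalization identities for $\mu_{dc}$ listed in {\bf Definition \ref{dcodetf}} (the four ``$\sum=1$'' conditions).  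Once $P$ is established as a joint law, the marginals $P(\alpha_{i_1},\beta_{j_2})$, $P(\beta_{j_2},\beta_{j_1})$, $P(\beta_{j_1},\alpha_{i_2})$, and $P(\alpha_{i_1},\alpha_{i_2})$ are all induced consistently, and the conditional entropies appearing in the inequality are unambiguously defined with respect to these marginals.

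Concretely the steps are: (i) state the entropy triangle inequality $S(X|Z)\le S(X|Y)+S(Y|Z)$ and prove it in one line from the chain rule together with ``conditioning does not increase entropy'' ($S(X|Y,Z)\le S(X|Y)$) and non-negativity of conditional entropy; (ii) invoke the hypothesis that the complete joint table $P(\alpha_{i_1},\alpha_{i_2},\beta_{j_1},\beta_{j_2})$ exists and check it is normalized using the $\mu_{dc}$ sum rules, so all needed conditional entropies refer to marginals of one common distribution; (iii) apply (i) with $(X,Y,Z)=(\alpha_{i_1},\beta_{j_2},\alpha_{i_2})$, then apply (i) again to the term $S(\beta_{j_2}|\alpha_{i_2})$ with $(X,Y,Z)=(\beta_{j_2},\beta_{j_1},\alpha_{i_2})$; (iv) collect terms to obtain the stated three-term upper bound, and remark that the crux — the point where locality enters — is precisely step (ii), since without an underlying joint distribution (as in genuine quantum mechanics) the chain of inequalities has no meaning and can be violated.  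The routine parts are (i) and the normalization bookkeeping in (ii); the conceptual heart is recognizing that the ``complete table'' hypothesis \emph{is} the local-realism assumption, which is what the later discussion will contrast against the amplitude-based construction.
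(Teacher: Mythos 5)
Your proposal is correct, and it rests on exactly the same two information-theoretic ingredients as the paper's proof --- the chain rule $S(X,Y)=S(X)+S(Y|X)$ and the monotonicity $S(X|Y,Z)\le S(X|Y)$, both only meaningful once the complete four-variable table $P(\alpha_{i_1},\alpha_{i_2},\beta_{j_1},\beta_{j_2})$ is granted. The difference is one of packaging. The paper expands the full four-variable joint entropy $S(\alpha_{i_1},\alpha_{i_2},\beta_{j_1},\beta_{j_2})$ twice via the chain rule in two different orders, bounds one expansion below by $S(\alpha_{i_2})+S(\alpha_{i_1}|\alpha_{i_2})$ using nonnegativity of $S(\beta_{j_1},\beta_{j_2}|\alpha_{i_2},\alpha_{i_1})$, applies monotonicity to the conditional terms of the other expansion, and cancels $S(\alpha_{i_2})$. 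You instead isolate the three-variable triangle inequality $S(X|Z)\le S(X|Y)+S(Y|Z)$ as a lemma (your one-line derivation of it is sound) and apply it twice, with $(\alpha_{i_1},\beta_{j_2},\alpha_{i_2})$ and then $(\beta_{j_2},\beta_{j_1},\alpha_{i_2})$; this reproduces the same bound without ever writing down the four-variable entropy explicitly. Your version is more modular and arguably cleaner, and it makes transparent that the inequality is just two hops of an entropic metric; the paper's version makes explicit where the full joint table enters (the single four-variable entropy being expanded), which is pedagogically useful for the local-realism discussion that follows. One caution: even in your route the four-variable table is still needed, not just two three-variable tables, because the term $S(\beta_{j_2}|\alpha_{i_2})$ must refer to the same marginal in both applications of the lemma --- you note this correctly, and your step (ii) checking that $P$ is a bona fide normalized distribution is a point the paper simply assumes.
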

\begin{proof}
On the assumption there exists a complete probability table, $P(\alpha_{i_1},\alpha_{i_2},\beta_{j_1},\beta_{j_2})$, a joint entropy can be written as
\begin{eqnarray*}
S(\alpha_{i_1},\alpha_{i_2},\beta_{j_1},\beta_{j_2})
&=&-\sum_{i_1,i_2,j_1,j_2}
P(\alpha_{i_1},\alpha_{i_2},\beta_{j_1},\beta_{j_2})
\log{P(\alpha_{i_1},\alpha_{i_2},\beta_{j_1},\beta_{j_2})}\\
&=&S(\alpha_{i_2}\cap\beta_{j_1}\cap\beta_{j_2},\alpha_{i_1})\\
&=&S(\alpha_{i_2}\cap\alpha_{i_1},\beta_{j_1}\cap\beta_{j_2}).
\end{eqnarray*}
Using the chain rule of entropy sequentially, one can get
\begin{eqnarray*}
S(\alpha_{i_2}\cap\beta_{j_1}\cap\beta_{j_2},\alpha_{i_1})
&=&
S(\alpha_{i_1}|\alpha_{i_2},\beta_{j_1},\beta_{j_2})
+S(\alpha_{i_2}\cap\beta_{j_1},\beta_{j_2})\\
&=&
S(\alpha_{i_1}|\alpha_{i_2},\beta_{j_1},\beta_{j_2})
+S(\beta_{j_2}|\alpha_{i_2}\cap\beta_{j_1})
+S(\alpha_{i_2},\beta_{j_1})\\
&=&
S(\alpha_{i_1}|\alpha_{i_2},\beta_{j_1},\beta_{j_2})
+S(\beta_{j_2}|\alpha_{i_2},\beta_{j_1})
+S(\beta_{j_1}|\alpha_{i_2})+S(\alpha_{i_2}).
\end{eqnarray*}
On the other hand, this joined entropy satisfies
\begin{eqnarray*}
S(\alpha_{i_2}\cap\alpha_{i_1},\beta_{j_1}\cap\beta_{j_2})
&=&S(\alpha_{i_2},\alpha_{i_1})
+S(\beta_{j_1},\beta_{j_2}|\alpha_{i_2},\alpha_{i_1}),\\
&\geq&S(\alpha_{i_2},\alpha_{i_1})\\
&=&S(\alpha_{i_2})+S(\alpha_{i_1}|\alpha_{i_2}).
\end{eqnarray*}
Inequality follows from nonnegativity of entropy. From the property of the probability measure in the probability space,
\begin{eqnarray*}
^\forall \alpha_1, ^\forall \alpha_2\in \V_A,~P(\alpha_1\cap\alpha_2)\leq P(\alpha_1),
\end{eqnarray*}
and the definition of joint entropy, the inequalities
\begin{eqnarray*}
S(\alpha_{i_1}|\alpha_{i_2},\beta_{j_1},\beta_{j_2})
&\leq&S(\alpha_{i_1}|\beta_{j_2}),\\
S(\beta_{j_2}|\alpha_{i_2},\beta_{j_1})
&\leq&S(\beta_{j_2}|\beta_{j_1}),
\end{eqnarray*}
follow. Then Bell's inequality is proved.
\end{proof}
The necessary condition for Bell's inequality, the existence of the complete probability table $P(\alpha_{i_1},\alpha_{i_2},\beta_{j_1},\beta_{j_2})$, corresponds to local realism in the physical terminology. Here, we give an example where Bell's inequality is not maintained.
%
%
\begin{definition}\label{dcodetf3}{\bf (Stochastic double codeword-transfer experiment without a complete probability table)}\\
Here, the number of codewords in the set is $m=2$~~for simplicity.
\end{definition}
\begin{enumerate}
\item\label{dcodetf3-i1} Set $m=2$ in {\bf Definition~\ref{dcodetf}-\ref{dcodetf-i1}} for two sets of codewords such as
\begin{eqnarray*}
\W_A&=&\{\alpha_1,\alpha_2\},\\
\W_B&=&\{\beta_1,\beta_2\},\\
\W_A\otimes\W_B&=&\{\alpha_1,\alpha_2,\beta_1,\beta_2\}=\{\gamma_1,\gamma_2,\gamma_3,\gamma_4\}
\end{eqnarray*}
and for state vectors as
\begin{eqnarray*}
\V_A\ni\psi_A(\theta_\alpha)&=&(\cos{\theta_\alpha},\sin{\theta_\alpha}),\\
\V_B\ni\psi_B(\theta_\beta)&=&(\cos{\theta_\beta},\sin{\theta_\beta}),
\end{eqnarray*}
where $0\le\theta_\alpha,\theta_\beta\le \pi$. This parametrization configures an example of {\bf Theorem~\ref{wootters}}.
\item The same as {\bf Definition~\ref{dcodetf}-\ref{dcodetf-i2}}.
\item Encoding is performed using following probabilistic function:
\begin{eqnarray*}
\mu_{dc}(\gamma_i,\gamma_j)=\left|\gamma_i\right|^2\left|\gamma_j\right|^2.
\end{eqnarray*}
\item $\B$ and $\CC$ select for measurement one of the elements (codewords) in $\W_A$ or $\W_B$, independently. Before measurement, $\B~(\CC)$ rotates a detector angle up to $\theta_b$ ($\theta_c$). Neither knows the rotating angle of the other. $\B$ and $\CC$ correct this rotation angle after completing all measurements. This rotation does not affect the error of the measurement, owing to {\bf Theorem~\ref{wootters2}}.
\begin{enumerate}
\item If state vectors $\{\alpha_i\}$ and $\{\beta_i\}$ exist locally before the measurement for $\B$, the probability that $\B$ may obtain each codeword can be obtained after rotation as 
\begin{eqnarray*}
\psi_\gamma(\theta_\gamma)\rightarrow
\psi_\gamma(\theta_\gamma-\theta_b)=R(\theta_b)\psi_\gamma, 
\end{eqnarray*}
where $R(\theta)$ is a rotation matrix, $\psi_\gamma\in\V_\alpha\oplus\V_\beta$, and $\theta_\gamma=\theta_\alpha$ or $\theta_\beta$ depending on $\psi_\gamma$.
The probability for $\CC$ is similar to the above. In this case we do not observe any violation of Bell's inequality since we can prepare the complete probability table.
\item\label{dcodetf3-i4b} Suppose the angles $\theta_\alpha$ and $\theta_\beta$ are not fixed before measurement and are fixed when $\B$ or $\CC$ measure the code from $\W_A$ or $\W_B$ and the probability measure $\mu_{dc}$ depends on the result of their decision. Moreover we require that the probability measure does not follow the {\it functional composition condition} (FUNC)\cite{flori2013first}. In a context of the report, the FUNC is a requirement for any function $f$ as arithmetic operations on vectors and real numbers as
\begin{eqnarray*}
~\forall f,~~\mu_{dc}(f(\psi_i,\psi_j),\psi_k)=f\left(\mu_{dc}(\psi_i,\psi_k),\mu_{dc}(\psi_j,\psi_k)\right).
\end{eqnarray*}
A function $f$ in l.h.s. maps real numbers to a real number. On the other hand, $f$ in r.h.s from vectors to a real number. Here we consider a natural isomorphism between real numbers and vectors in operations of addition, subtraction, and (scalar) product, and represented the same symbol $f$. For a current example, the probability measure does not satisfy the FUNC, for example, as
\begin{eqnarray*}
\mu_{dc}(\alpha_1+\alpha_2,\psi_k)&=&
\left|\alpha_1+\alpha_2\right|^2\left|\psi_k\right|^2,\\
\mu_{dc}(\alpha_1,\psi_k)+\mu_{dc}(\alpha_2,\psi_k)
&=&\left(\left|\alpha_1\right|^2+\left|\alpha_2\right|^2\right)\left|\psi_k\right|^2,\\
&\neq&\mu_{dc}(\alpha_1+\alpha_2,\psi_k).
\end{eqnarray*}
Suppose $\CC$ obtains $\alpha_1$~($\alpha_2$). The angle $\theta_\alpha$ for $\B$ is fixed as $\theta_\alpha=\theta_c$~($\theta_\alpha=\pi/2+\theta_c$), i.e., the probability table is now situation-dependent. The state vectors for $\B$ are now 
\begin{eqnarray*}
\psi_A&=&
\begin{cases}
\psi_A(\theta_c-\theta_b)
&\CC~{\rm obtained}~\alpha_1\\
\psi_A(\theta_c+1/2-\theta_b)
&\CC~{\rm obtained}~\alpha_2,\\
\psi_A(\theta_\alpha-\theta_b)
&\CC~{\rm obtained}~\beta_i.
\end{cases}
\end{eqnarray*}
If $\B$ decided to measure a codeword from a set $\W_\beta$, nothing would happen. On the other hand, if $\B$ decided to measure a codeword from the same set as $\CC$, then
\begin{eqnarray*}
&~&\mu_{dc}(\alpha_1,\alpha_1+\alpha_2)\\
&=&\Bigl|
\cos{(\theta_c-\theta_b)}\cos{(\theta_\alpha-\theta_c)}
-\sin{(\theta_c-\theta_b)}\sin{(\theta_\alpha-\theta_c)}\Bigr|^2\\
&=&\cos^2{(\theta_\alpha-\theta_b)},\\
&~&\mu_{dc}(\alpha_2,\alpha_1+\alpha_2)\\
&=&\Bigl|\sin{(\theta_c-\theta_b)}\cos{(\theta_\alpha-\theta_c)}
+\cos{(\theta_c-\theta_b)}\sin{(\theta_\alpha-\theta_c)}\Bigr|^2\\
&=&\sin^2{(\theta_\alpha-\theta_b)}.
\end{eqnarray*}
Again the probability to obtain one of the $\alpha$ can be calculated using only local parameters on $\B$. In both cases, $\B$ can obtain a set of codewords that $\A$ intended to send. The probability table is situation-dependent and there is a possibility that Bell's inequality will be violated.
\end{enumerate}
\item 6.~The same as the {\bf Definition~\ref{dcodetf}}.$\QED$
\end{enumerate}
It is proved that the Kochen--Specker theorem is incompatible the FUNC\cite{flori2013first}. Above stochastic double codeword-transfer experiment is a model of the QM violating the FUNC to incorporate the Kochen--Specker theorem. In order to confirm a violation of Bell's inequality, it is tested numerically according to the above example. A correlation between measured codewords independently obtained by $\B$ and $\CC$ is defined mimically like CHSH\cite{PhysRevLett.23.880} as
\begin{eqnarray*}
\Delta S&=&S(\alpha_{i_1}|\alpha_{i_2})-\left(
S(\alpha_{i_1}|\beta_{j_2})+S(\beta_{j_2}|\beta_{j_1})+S(\beta_{j_1}|\alpha_{i_2})\right).
\end{eqnarray*}
According to the results of {\bf Theorem \ref{bell}}, $\Delta S$ is bounded by negative values when the complete probability table exists. If the theory is based on local realism, one can always prepare the complete table to observe codewords for both $\B$ and $\CC$. In order to design the experiment that gives a stronger correlation ($\Delta S>0$), one has to employ a rule for choosing the probability table, i.e., a choice that cannot be determined locally. Moreover, the rule must also satisfy requirements from special relativity, if one would like to interpret as physical law. The stochastic double codeword-transfer experiment defined by {\bf Definition \ref{dcodetf3}} is an example of such a rule. Under {\bf Definition \ref{dcodetf3}-\ref{dcodetf3-i4b}}, for instance, $\B$ cannot know the probability table he is using because it depends on $\CC$'s decision, and that cannot be known by $\B$. This lack of the complete probability table is deeply related to the Kochen--Specker theorem (KST). The KST insists of absence of a complete set of physical quantities without measurements in QM, and corresponds exactly to lack of the complete probability table introduced in {\bf Definition \ref{dcodetf3}}. Moreover, if we look at only $\B$'s results, we cannot extract any information about $\CC$'s choices and results; that means $\CC$'s information cannot transferred to $\B$ immediately, which is a requirement from special relativity. This coexistence of nonlocality and special relativity is realized by the rule of {\bf Definition \ref{dcodetf3}-\ref{dcodetf3-i4b}} of the stochastic double codeword-transfer experiment. The probability tables, $\mu_B(\alpha_1)$ and $\mu_B(\alpha_2)$, include $\theta_c$, though these are tables for $\B$, which is called ``{\it entanglement}''. However, $\B$ cannot extract a value of $\theta_c$ because $\theta_c$ appears only in phase of the unitary transformation and disappears after reaching the average. Violation of Bell's inequality can be judged by checking whether the correlation $\Delta S$ is greater than zero or not. Numerical results with employing rule of {\bf Definition \ref{dcodetf3}-\ref{dcodetf3-i4b}} are calculated and shown in Fig. \ref{bell23}. One can clearly see the violation of Bell's inequality in some parameter regions.

This trick can be implemented because the probability table represents probability amplitude. For example, if $\B$ decided to measure a codeword from the same set as $\CC$, say $\V_A$, state vectors for $\B$ is superposition of two possible sates depending on the result of measurement of $\CC$ such as
\begin{eqnarray*}
\left(
    \begin{array}{c}
      \alpha_1 \\
      \alpha_2 
    \end{array}
  \right)
=&~&R(\theta_b)\cdot
\left(
    \begin{array}{c}
      \cos{\theta_\alpha} \\
      \sin{\theta_\alpha}
    \end{array}
  \right)\Bigl|_{\theta_\alpha\rightarrow\theta_c}\times
(1,0)\cdot R(\theta_c)\cdot
\left(
    \begin{array}{c}
      \cos{\theta_\alpha} \\
      \sin{\theta_\alpha}
    \end{array}
  \right)\\
&+&R(\theta_b)\cdot
\left(
    \begin{array}{c}
      \cos{\theta_\alpha} \\
      \sin{\theta_\alpha}
    \end{array}
  \right)\Bigl|_{\theta_\alpha\rightarrow\theta_c+\pi/2}\times
(0,1)\cdot R(\theta_c)\cdot
\left(
    \begin{array}{c}
      \cos{\theta_\alpha} \\
      \sin{\theta_\alpha}
    \end{array}
  \right),\\
=&~&\left(
    \begin{array}{c}
      \cos{(\theta_c-\theta_b)}\cos{(\theta_\alpha-\theta_c)}
     -\sin{(\theta_c-\theta_b)}\sin{(\theta_\alpha-\theta_c)}\\
      \sin{(\theta_c-\theta_b)}\cos{(\theta_\alpha-\theta_c)}
     -\sin{(\theta_c-\theta_b)}\cos{(\theta_\alpha-\theta_c)}\\
     
    \end{array}
  \right),
\end{eqnarray*}
where $R(\theta)$ is a rotation matrix. Then the probabilities of $\mu_{dc}(\alpha_i)$ are obtained as in {\bf Definition \ref{dcodetf3}-\ref{dcodetf3-i4b}}. The nonlocal realism is induced by squaring the state vector after superposition of two possible states. This is another example outlining the advantage of the method using probability amplitude. 
\begin{figure}[t]
 \begin{minipage}{0.5\hsize}
  \begin{center} 
     \includegraphics[height=4cm]{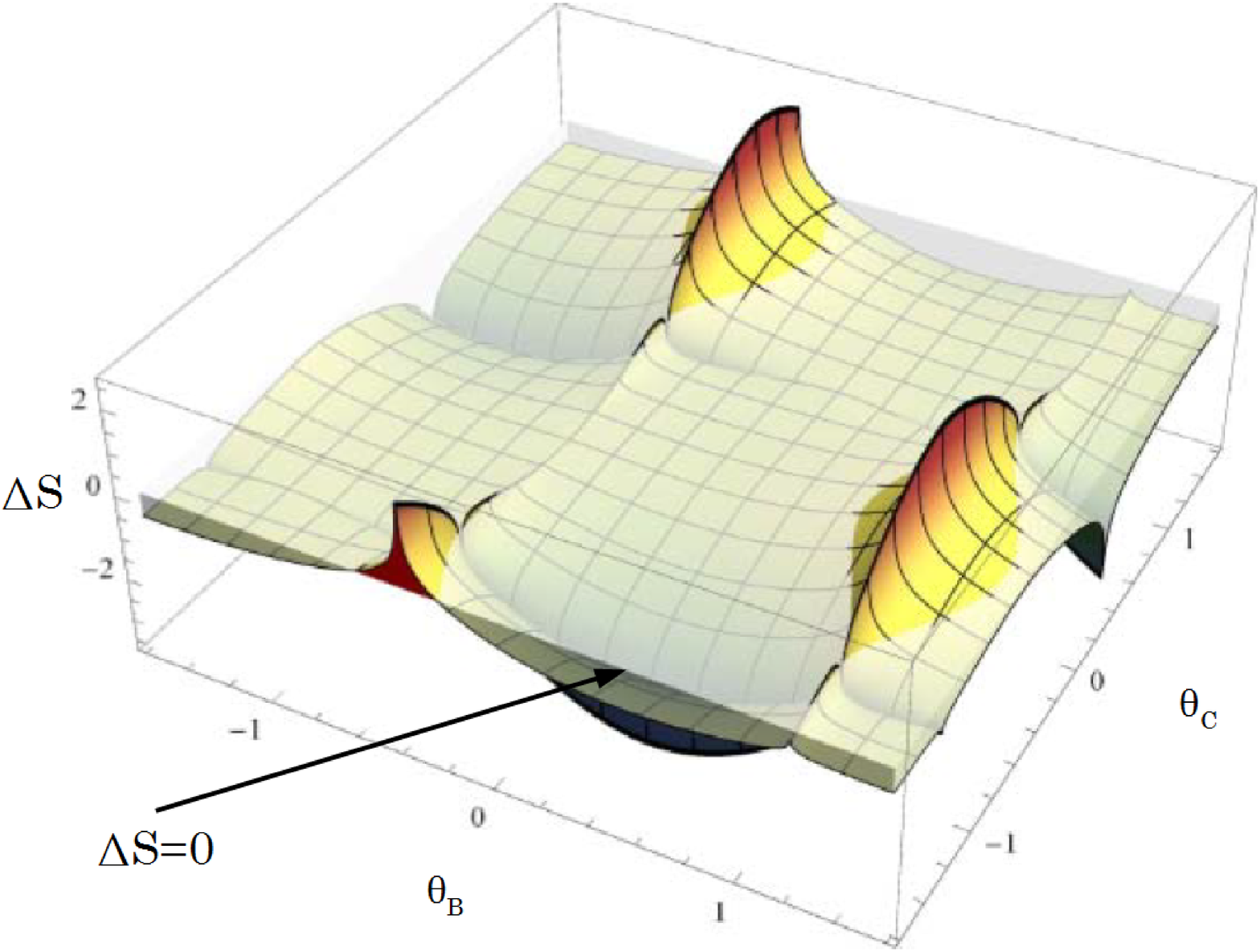}%
   \end{center}
  \end{minipage}
  \begin{minipage}{0.45\hsize}
   \begin{center} 
     \includegraphics[height=4cm]{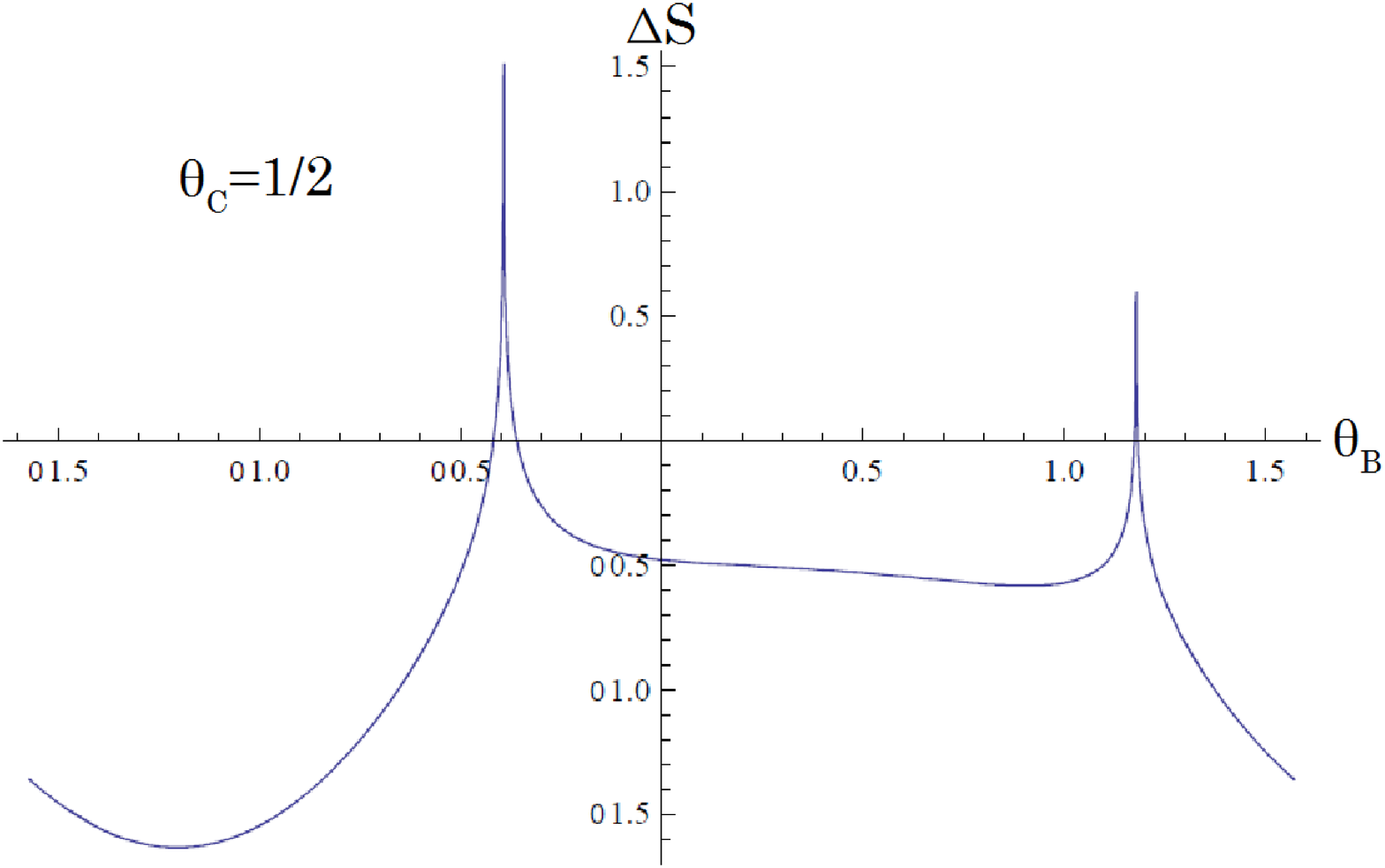}%
   \end{center}
  \end{minipage}
 \begin{center} 
 \caption{$\Delta S$ for the stochastic double codeword-transfer experiment without a complete probability table. It can be seen that Bell's inequality is broken in a part of parameter region.} \label{bell23}
 \end{center} 
\end{figure}
\section{Summary}
In this report, a basic definition of quantum mechanics concerning its static aspect is proposed. Here, a dynamic aspect of quantum mechanics is not treated. The simple codeword-transfer experiment which satisfies the definition of quantum mechanics is designed to investigate some of it aspects. Then it is proved that a method using probability amplitude gives the minimum error for the physical observables using information theory. Also, it is shown that the size of the error doesn't depend on parametrization of the coding. 
Nonlocal realism is one of the most essential parts of the nature of quantum mechanics. It is shown that quantum mechanics defined here can include nonlocal realism for the double codeword-transfer experiment introduced by extending a codeword-transfer experiment, above. We showed that the quantum mechanics defined here can violate Bell's inequality, thanks to the property of the probability amplitude.

In conclusion, the probability amplitude rather than probability density gives the minimum and independent mean-square errors from parametrization. Moreover, it allows one to obtain nontrivial and nonlocal correlation on two independent measurements which violate Bell's inequality incorporate with the Kochen--Specker theorem. It is worth pointing out that nonlocal realism can be realized without any complex-number valued amplitude here. The complex-number valued amplitude could be one of convenient representations for quantum mechanics, but indispensable ingredient of that.
\section*{Acknowledgments}
We wish to thank to Dr. Y.~Sugiyama, Profs. T.~Kaneko, K.~Kato, T.~Kon, and F.~Yuasa for their continuous encouragement and fruitful discussions. We wish to thank Prof. Tsutsui for his excellent lecture about advanced quantum mechanics. 
\bibliographystyle{plain}	

\begin{thebibliography}{10}
\bibitem{citeulike8529862}
J.~A. Wheeler.
\newblock {\em Information, physics, quantum: The search for links}.
\newblock Physics Dept., University of Texas, 1990.

\bibitem{Wheeler1991fs}
J.~A. Wheeler.
\newblock Sakharov revisited: It from bit.
\newblock In L.~V. Keldysh and V.~Ya. Fainberg, editors, {\em Proceedings,
  Sakharov memorial lectures in physics, vol. 2}, pages 751--770, Mosccow,
  1991. Nova Science Publishers.

\bibitem{springerlink:10.1007/BF01397184}
M.~Born.
\newblock Quantenmechanik der sto{\ss}vorg{\" a}nge.
\newblock {\em Zeitschrift f{\" u}r Physik A Hadrons and Nuclei}, 38:803--827,
  1926.
\newblock 10.1007/BF01397184.

\bibitem{covert91-12-11}
T.~Cover and J.~Thomas.
\newblock {\em Elements of information theory}.
\newblock Wiley, New York, 1991.

\bibitem{flori2013first}
C.~Flori,
\newblock {\it A First Course in Topos Quantum Theory},
\newblock Springer London, Limited, 2013
\bibitem{kurihara}
Y.~Kurihara.
\newblock Classical information theoretic view of physical measurements and
  generalized uncertainty relations.
\newblock {\em Journal of Theoretical and Applied Physics}, 7(1):28, 2013.

\bibitem{Fisher}
R.~A. Fisher.
\newblock {\em Proc. R. Soc. Edinburgh}, 42:321, 1922.

\bibitem{Wootters1}
W.~K. Wootters.
\newblock Information is maximised in photon polarization measurements.
\newblock In A.~R. Marlow, editor, {\em Quantum Theory and Gravitation}, pages
  13--26, Bostosn, 1980. Academic Press.

\bibitem{Wootters2}
W.~K. Wootters.
\newblock {\em Phy. Rev.}, D23:357, 1981.

\bibitem{e15083220}
W.~K. Wootters.
\newblock Communicating through probabilities: Does quantum theory optimize the
  transfer of information?
\newblock {\em Entropy}, 15(8):3130--3147, 2013.

\bibitem{Rao}
C.~R. Rao.
\newblock Information and accuracy obtainable in the estimation of statistical
  parameters.
\newblock {\em Bull. Calcutta Math. Soc.}, 37:81, 1945.

\bibitem{Cramer}
H~Cram$\grave{\rm e}$r.
\newblock {\em Mathematical Method of Statisticas}.
\newblock Princeton University Press, 1946.

\bibitem{Bell}
J.S. Bell.
\newblock {\em Physics}, 61:195--200, 1964.

\bibitem{kochen1975problem}
S.~Kochen and E.P. Specker.
\newblock {The problem of hidden variables in quantum mechanics}.
\newblock {\em Journal of Mathematics and Mechanics}, 17(1):59--87, 1967.

\bibitem{Braunstein}
S.~L. Braunstein and C.~M. Caves.
\newblock {\em Phys. Rev. Lett.}, 61:662, 1988.

\bibitem{PhysRevLett.23.880}
J.~F. Clauser, M.~A. Horne, A.~Shimony, and R.~A. Holt.
\newblock Proposed experiment to test local hidden-variable theories.
\newblock {\em Phys. Rev. Lett.}, 23:880--884, Oct 1969.
\end{thebibliography}

%
%
\newpage
\appendix
\section{Appendix}\label{ap}
\subsection{Classical estimation theory}\label{ap1}
We define terms associated with physical measurement according to classical estimation theory\cite{covert91-12-11} as follows. Let $\X$ be a random variable for a given physical system described by the $N$-tuple 
${\bm \theta}=\{\theta_1,\cdots,\theta_N\}$, where $\theta_i$ is the {\it i~th physical parameter}. The set of all possible values of 
$\theta_i \in \R$, denoted by $\Theta$, is called the {\it parameter set}. The random variable $\X$ is distributed according to the probability density function ${\rm f}(x;{\bm \theta})\geq 0$, which is normalized as $\int_{x\in \Omega} dx~{\rm f}(x;{\bm \theta})=1$, where $x\in\R$ is one possible value of the whole event $(=\Omega)$. 
For physical applications, we introduce the {\it probability amplitude} defined by
\begin{eqnarray*}
\left| 
\omega(x;{\bm \theta})
\right|^2={\rm f}(x;{\bm \theta}).
\end{eqnarray*}
A part of {\it experimental apparatus} is assumed to output numbers distributed according to the probability density. Any resulting set of numbers $\X_n=\{x_1,\cdots,x_n \}$, drawn independently and identically distributed (i.i.d.), is called the {\it experimental data}. The estimate of the physical parameter is called a {\it measurement}. Because experimental data are i.i.d., the corresponding probability density function can be expressed as a product:
\begin{eqnarray*}
{\rm f}(\X_n;{\bm \theta})=\prod_{j=1}^n~{\rm f}(x_j;{\bm \theta}).
\end{eqnarray*}
A function mapping the experimental data to one possible value of the parameter set such as 
\begin{eqnarray*}
T_i:\X_n \rightarrow \Theta:\{x_1,\cdots,x_n\}\mapsto{\tilde \theta}_i
\end{eqnarray*}
is called an {\it estimator} for the $i$th physical parameter, denoted by
$
T_i(\X_n)={\tilde \theta_i}.
$
The {\it experimental error} in the $i$th physical parameter is defined as the root mean square error: 
\begin{eqnarray*}
\epsilon_i = E[(T_i(\X_n)-\theta_i)^2]^{1/2},
\end{eqnarray*}
where $\theta_i$ is the true value of the $i$ th physical parameter. True values of physical parameters are typically unknown, but a mean-square error can be reduced below any desired value by accumulating a sufficiently large amount of experimental data, thanks to the law of large numbers. If the mean value of the experimental error converges to zero in probability, i.e.,
\begin{eqnarray*}
\lim_{n\rightarrow\infty}E_{\theta_i}[\tilde\theta_i-\theta_i]\rightarrow 0~~
\rm{(in~probability)},
\end{eqnarray*} 
after accumulation of infinitely many statistics,
that estimator is called an {\it unbiased estimator}. Among such estimators, the one giving the least error is called the {\it best estimator}.
\subsection{Information theory}\label{ap2}
For a probability space $(\Omega,{\cal A},P)$ and probability variable $X$ defined on it, information entropy $S(X)$ is defined as
\begin{eqnarray*}
S(X)=-\sum_{x \in \Omega}P(x) \log{P(x)}.
\end{eqnarray*}
$S(X)\geq0$ immediately follows from $0\leq P\leq1$.
For two probability variable $X,Y$ whose domains are $\Omega_x,\Omega_y$, where $\Omega_x,\Omega_y \subseteq \Omega$, a joint entropy is defined as 
\begin{eqnarray*}
S(X,Y)=-\sum_{x \in \Omega_x}\sum_{y \in \Omega_y}P(x\cap y) \log{P(x\cap y)},
\end{eqnarray*}
where $P(x\cap y)$ is a probability to observe $x$ in $X$ and $y$ in$Y$, simultaneously. A conditional entropy is defined as
\begin{eqnarray*}
S(Y|X)=-\sum_{x \in \Omega_x}\sum_{y \in \Omega_y}P(x\cap y) \log{P(y|x)},
\end{eqnarray*}
Where $P(y|x)$ is conditional probability to observe $y$ in $Y$ when $x$ in$X$ is obtained. On those entropies, following formulae are obtained:
\begin{eqnarray*}
S(X,Y)&=&S(X)+S(Y|X),\\
S(X|Y)&\leq&S(X).
\end{eqnarray*}
\end{document}